\newtheorem{thm}{Theorem}[section] 
\newtheorem{lemma}[thm]{Lemma}
\newcommand{\bra}[1]{\langle#1|}
\newcommand{\ket}[1]{|#1\rangle} 
\newcommand{\braket}[2]{\langle#1|#2\rangle} 
\newcommand{\ketbra}[2]{|#1\rangle \langle#2|} 
\newcommand{\pbrax}[2]{  \bra{\Phi} \hat{P}_{#1} \hat{P}_{#2} \ket{\Phi}  }
\newcommand{ \pbrac}[1]{  \bra{\Phi} \hat{P}_{#1}  \ket{\Phi} }
\begin{document}

 \title{Entanglement enhancement of a noisy classical communication channel}
\date{September 6, 2011}
\author{H. Thomas Williams}
\email{williamsh@wlu.edu}
\author{Paul Bourdon}
\email{bourdonp@wlu.edu}
\begin{abstract}   We present a formal quantum mechanical analysis of the communication protocol of Prevedel {\it et al.}\ [Phys. Rev. Lett. \textbf{106}, 110505 (2011)], in which entanglement shared by sender and receiver is used to enhance, beyond that achievable via the optimal classical strategy,  the probability of successful transmission of a bit through a particular noisy classical channel ${\cal N}$. We provide a full analysis of this protocol when the shared entanglement resides in a two-qubit system.  Our analysis shows the measurement choices specified by the protocol yield the maximum possible enhancement of the probability of successful communication of the bit with one use of the channel ${\cal N}$.  We determine that shared entanglement residing in a two-qudit system with $d > 2$ cannot provide enhancement beyond that produced by entangled qubits. Finally, we show how the protocol should be extended when probability parameters for the channel ${\cal N}$ are allowed to vary.  
 \end{abstract} 
\pacs{03.67.Ac,03.67.Hk,89.70.Kn}
\maketitle

\section{Introduction}
The requirements of causality prohibit quantum entanglement alone  from serving as a medium of communication; however, it can quantitatively enhance
communication in a variety of settings.  For example, entanglement exactly doubles the classical capacity of a noiseless quantum channel via
superdense coding \cite{bennett2}, and can provide arbitrarily large capacity-enhancement factors for noisy quantum  channels \cite{bennett3}.

Recent attention has focused as well on the issue of  enhancing the reliability of classical communication over noisy classical channels via quantum entanglement.  While shared entanglement between sender and receiver cannot increase the capacity of noiseless classical channels \cite{bennett1}, nonetheless some noisy classical channels become more effective when supplemented by enhancement.  Cubitt {\it et al.}\ \cite{cubitt} considered the effect of shared entanglement on zero-error classical communication in the case of a single use of a classical channel.  Such classical channels take multiple inputs, and one can identify a maximum set of inputs that lead to distinct distinguishable outputs:  the size of this set is a measure of the zero-error, one-shot capacity of the channel.  In particular, this theoretical work included consideration of a channel with 24 inputs and 18 outputs.  The classical channel alone allows a maximum of 5 messages to be sent error-free in each distinct use of the channel, but 6 or more error-free messages can be sent for each use of the channel when supplemented by shared quantum entanglement.  A much simpler noisy classical channel  ${\cal N}$, one with 4 inputs and 6 distinct, randomly appearing outputs, has been examined by Prevedel \textit{et al.}\ \cite{prevedel}.  They  presented a protocol providing a significant enhancement of the probability of successful communication of one bit in a single use of the  channel ${\cal N}$ through clever exploitation of shared entanglement.  Furthermore, they provided experimental verification of the enhancement.

Herein, we examine in some detail the quantum mechanics of the communication protocol introduced in reference \cite{prevedel}. We begin by establishing a formal expression for the success probability of protocols having the structure of that from \cite{prevedel}.  This expression is based upon (i) the quantum state of the entangled pair shared by sender Alice and receiver Bob, (ii) the quantum states representing the measurements made by Alice on her particle from the entangled pair (based on what message is to be sent), and (iii) the quantum states representing the measurements made by Bob on his particle (based on the output of channel ${\cal N}$ and designed to increase the chance that he will correctly determine Alice's message).   This expression is then used  to examine fully the case in which  Alice and Bob share an entangled two-qubit system, exactly the situation tested by Prevedel {\it et al.}  We produce a rigorous proof that the protocol presented in \cite{prevedel} is the best possible in the sense that the measurement strategy it specifies  provides the maximum enhancement of the probability that Alice will successfully communicate to Bob one bit through one use of the channel ${\cal N}$.  Following that we generalize the protocol by considering entangled qudits of arbitrary dimensionality, and determine that (perhaps counterintuitively) entangled qudits with dimension greater than two do not enhance communication beyond that produced by entangled qubits.  Finally, we show how the protocol should be extended when probability parameters for the channel ${\cal N}$ are allowed to vary. 

\section{Communication protocol \label{comprot}}
Alice signals Bob via a noisy classical channel $\cal N$.  $\cal N$ takes as input two bits $(q_1,q_2)$ and delivers as output a trit and a bit of one of the forms $(1,q_1)$, $(2,q_2)$, or $(P,q_1 \oplus q_2)$, where the third, parity option returns in the bit place the sum of the two sent bits, modulo 2. For the moment we assume the three output forms specified by the the trit appear  with equal probabilities.  The equal-probability assumption will be relaxed in section \ref{genprob}.  Use of this channel alone to communicate a single bit $q$ can give Bob a $5/6$ probability of properly decoding the message $q$.  One straightforward way to achieve this is for Alice to send $(q,q)$: Bob will receive one of three equally probable channel outputs,  $(1,q)$, $(2,q)$, or $(P,0)$.
  In the first two cases he knows the message, and although the parity output carries no information, he has a 50\% chance of guessing the value of $q$ (with the assumption that messages $q=0$ and $q=1$ are sent with equal probability).

If Alice and Bob share the quantum resource of an entangled pair of particles  in state $\ket{\Phi}$, it was shown by Prevedel {\it et al.}\ in the case of shared qubits, that the $5/6$ classical-decoding probability can be improved upon.  The protocol to accomplish this is as follows: 
\begin{itemize} 
   \item If Alice wishes to communicate the message $q=0$, she measures her half the entangled state along the axis $\ket{\psi}$, and if her measurement returns an outcome in the direction she measures (``true") she sets a bit $\alpha$ to value $0$, otherwise $\alpha = 1$; if she wishes to send $q=1$, she measures along $\ket{\psi'}$, setting the value $\alpha=0$ for ``true" and $\alpha=1$ otherwise; 
   \item Alice inputs $(q,\alpha)$ into the channel $\cal N$;
   \item If Bob receives $(1,q)$ he knows the message $q$ and does nothing more; 
   \item If Bob receives $(2,\alpha)$ he measures his half the entangled state along $\ket{\eta'}$, setting variable $\beta = 0$ if the measurement yields ``true," setting $\beta=1$ otherwise: if $\beta = \alpha$, Bob assumes Alice has measured along $\ket{\psi}$ and the message is $q=0$, otherwise he assumes the message is $q = 1$; 
   \item If Bob receives $(P,q \oplus \alpha)$ he measures his half the entangled state along $\ket{\eta}$ and sets $\beta= 0$ if the measurement gives ``true,''  setting $\beta=1$ otherwise:  Bob assumes in this case that $\alpha = \beta$; if this is indeed the case, he can extract the message $q$ from the received value $q \oplus \alpha$ (by adding $\beta$).
\end{itemize}
   \subsection{Calculation of probability of successful transmission of $q$}
 When Bob receives $(1,q)$ he immediately knows Alice's message, contributing a factor $1/3$ to his success rate.  For the other channel outputs, his overall success rate in deciphering the message depends on Alice's measurement directions and his own.  In order to calculate quantum measurement probabilities, we utilize projection operators, e.g. $\hat{P}_{\psi} \equiv \ketbra{\psi}{\psi}$.  The probability of measuring a state $\ket{\phi}$ along $\ket{\psi}$ and finding the particle so aligned is $\bra{\phi} \hat{P}_{\psi} \ket{\phi}$;  the probability that the measurement yields ``false" is $\bra{\phi} (I - \hat{P}_{\psi}) \ket{\phi}$.  Two-particle product states in what follows will be written as linear combinations of kets of the form $\ket{\psi,\eta}$ with the first entry being Alice's particle state, and the second being Bob's.  

When Bob receives $(2,\alpha)$ he successfully decodes the message $q$ when $q=0$ and $\alpha = \beta$ or when $q=1$ and $\alpha \neq \beta$.  A detailed success analysis appears below, with (a) and (b) pertaining to the $q=0$ message and (c) \ and (d) pertaining to the message $q=1$: 
\begin{itemize}
\item[(a)] Alice measures along $\ket{\psi}$ getting $\alpha = 0$ and Bob (measuring along $(\ket{\eta'})$) gets $\beta=0$, with probability
\[ \pbrax{\psi}{\eta'} ; \] 
\item[(b)]  Alice measures along $\ket{\psi}$ getting $\alpha = 1$ and Bob gets $\beta=1$, with probability
\[  \bra{\Phi}(I_A - \hat{P}_{\psi} )((I_B - \hat{P}_{\eta'})   \ket{\Phi}  = 1 + \pbrax{\psi}{\eta'} - \pbrac{\psi} - \pbrac{\eta'};\ \] 
\item[(c)] Alice measures along $\ket{\psi'}$ getting $\alpha = 0$ and Bob gets $\beta=1$, with probability
\[  \bra{\Phi}\hat{P}_{\psi'}(I_B - \hat{P}_{\eta'} )  \ket{\Phi}  =  -  \pbrax{\psi'}{\eta'} + \pbrac{\psi'}  ;\] 
 \item[(d)] Alice measures along $\ket{\psi'}$ getting $\alpha = 1$ and Bob gets $\beta=0$, with probability  
\[  \bra{\Phi}(I_A - \hat{P}_{\psi'} )\hat{P}_{\eta'}  \ket{\Phi}  =  - \pbrax{\psi'}{\eta'} + \pbrac{\eta'}  .\] 
\end{itemize}
Operators $I_A$ and $I_B$ are identity operators in the space of Alice's and Bob's particles, respectively.  We omit these operators when context makes an expression's meaning obvious, e.g. $\bra{\Phi}I_A  \hat{P}_{\eta'}   \ket{\Phi}$ is written as $\bra{\Phi} \hat{P}_{\eta'} \ket{\Phi}$.
Because the output of channel $\cal N$ is $(2,\alpha)$ with probability $1/3$, while there's a $1/2$ chance that $q=0$ and a $1/2$ chance that $q=1$, the probability that Bob obtains the message $q$ when the channel output is $(2,\alpha)$ is given by one-sixth the sum of the probabilities given in (a) through (d) above, thus
\[ \frac16 \left( 1 + 2 \pbrax{\psi}{\eta'} -2\pbrax{\psi'}{\eta'} + \pbrac{\psi'} - \pbrac{\psi} \right) . \]

When Bob receives $(P,q \oplus\alpha)$ he successfully deduces $q$ precisely when $\alpha = \beta$ independent of the value of $q$.  Here are the details, with  (e) and (f) describing success when $q=0$  and (g) and (h) describing success when $q=1$:  
\begin{itemize}
\item[(e)]  Alice measures along $\ket{\psi}$ and $\alpha = \beta = 0$, with probability
\[   \pbrax{\psi}{\eta}; \]
\item[(f)] Alice measures along $\ket{\psi}$ and $\alpha = \beta = 1$, with probability
\[  \bra{\Phi}(I - \hat{P}_{\psi} )(I - \hat{P}_{\eta} )  \ket{\Phi}  = 1 + \pbrax{\psi}{\eta} - \pbrac{\psi} - \pbrac{\eta} ;\] 
\item[(g)] Alice measures along $\ket{\psi'}$ and $\alpha = \beta = 0$, with probability
\[     \pbrax{\psi'}{\eta}   ;\] 
\item[(h)] Alice measures along $\ket{\psi'}$ and $\alpha = \beta = 1$, with probability
\[ \bra{\Phi}(I - \hat{P}_{\psi'} )(I - \hat{P}_{\eta})\ket{\Phi}  = 1 + \pbrax{\psi'}{\eta} - \pbrac{\psi'}  -  \pbrac{\eta}  .\] 
\end{itemize}
Because this final case (reception of $(P,q \oplus \alpha)$) occurs with probability $1/3$ and, just as before, the probabilities that $q=1$ and $q=0$ are each $1/2$, we find the probability of success in this case to be one-sixth the sum of the probabilities given in (e) through (h) above, thus
\[ \frac16 \left( 2 + 2 \pbrax{\psi}{\eta} + 2 \pbrax{\psi'}{\eta} - 2 \pbrac{\eta} - \pbrac{\psi} - \pbrac{\psi'}  \right) . \] 

The overall success rate of the protocol is the sum of the success probabilities for the three channel outputs that Bob can receive:
\begin{equation}
 S = \frac56 +  F(\ket{\Phi}, \ket{\psi}, \ket{\psi'}, \ket{\eta}, \ket{\eta'}) \label{success}
\end{equation}
with
\begin{eqnarray}
   F(\ket{\Phi}, \ket{\psi}, \ket{\psi'}, \ket{\eta}, \ket{\eta'}) = \frac13 \left(\right. \pbrax{\psi}{\eta} &+& \pbrax{\psi}{\eta'} + \pbrax{\psi'}{\eta} - \pbrax{\psi'}{\eta'}  \nonumber \\
 &-&  \pbrac{\psi} - \pbrac{\eta}  \left.\right). \label{enhancement}
\end{eqnarray}
The function $F$ represents the departure of the success rate for the entanglement-enhanced communication protocol from that of the best rate achievable using the classical channel alone($5/6$).  It can be positive (as exhibited by the Prevedel result) as well as negative (in that one can pick measurement directions that cause the assisted protocol to perform worse than the simple classical channel ideal).  The remainder of this paper focuses on maximization of $F$, our objective function.

\section{Shared qubits \label{qubits}}
  Consider the maximization of the objective function when the shared quantum resource is two qubits.  Without loss of generality, we can assumed the shared state to have the form of the Schmidt decomposition \cite[p.\ 109]{neilsen}
\[ \ket{\Phi} = \sqrt{L_0}\, \ket{0,0} +\sqrt{1-L_0}\, \ket{1,1} , \]
with $L_0$ real and non-negative.
Likewise, with full generality we can define
\[ \ket{\psi} = a \ket{0} + \exp(i\alpha) \sqrt{1 - a^2} \, \ket{1} \;\;,\;\; \ket{\psi'} = a' \ket{0} + \exp(i\alpha') \sqrt{1 - {a'}^2}\,  \ket{1}\]
 as kets in the space of Alice's particle, where $a$, $a'$ are real and non-negative, and $\alpha$, $\alpha'$ are real phase angles, and
\[ \ket{\eta} = b \ket{0} + \exp(i\beta) \sqrt{1 - b^2}\,  \ket{1} \;\;,\;\; \ket{\eta'} = b' \ket{0} + \exp(i\beta') \sqrt{1 - {b'}^2}\,  \ket{1} \]
as kets in the space of Bob's particle, where $b$, $b'$ are real and non-negative, and $\beta$, $\beta'$ are real phase angles.
Thus, for example
\[ \pbrax{\psi}{\eta} = \left| a b \sqrt{L_0} + \exp(-i(\alpha + \beta)) \sqrt{1 - a^2}\sqrt{1 - b^2} \sqrt{1 - L_0} \right|^2 \]
and 
\[ \pbrac{\psi} = a^2  L_0 + (1-a^2)(1-L_0) . \]
Using these as templates along with Eq.~(\ref{enhancement}), we obtain in the special case $L_0=1$, 
\[ F_{L_0=1} = \frac13 \left( -a^2 - b^2 -{a'}^2{b'}^2+a^2b^2+a^2{b'}^2+{a'}^2 b^2 \right) , \]
and in the case $L_0 = 0$, the identical result:  $F_{L_0=0} = F_{L_0 = 1}$.

Expressing the general form of (\ref{enhancement}), showing explicitly all dependence on $L_0$, we have
\begin{eqnarray}
 F(\Phi, \psi, \psi', \eta, \eta') &=& L_0 F_{L_0=1} +(1-L_0)F_{L_0=0}+ \sqrt{L_0(1-L_0)} G(a,a',b,b',\alpha, \alpha',\beta,\beta') \nonumber \\
  &=& F_{L_0=0} + \sqrt{L_0(1-L_0)} G(a,a',b,b',\alpha, \alpha',\beta,\beta'). \nonumber
\end{eqnarray}
The term containing $G$ comes from the cross terms of the squares in the first four terms on the RHS of Eq.~(\ref{enhancement}), and it includes all the $L_0$ dependence in this result.  When $L_0=0$, the shared resource $\ket{\Phi}$ is a product state, without entanglement.  Since in this case there can be no enhancement of the result obtained using the classical communication channel $\cal N$ alone, we must have $F_{L_0=0} <=0$.  This implies that when entanglement enhancement does take place, the term containing $G$ must be positive. This term is maximized when $L_0 = 1/2$.  Entanglement enhancement is optimized, therefore, for full entanglement
\begin{equation} \ket{\Phi} = \frac1{\sqrt{2}} ( \ket{0,0} + \ket{1,1} ) . \label{fullent} \end{equation}

There remains some unitary freedom that can further simplify our maximization process.  Since the shared state (\ref{fullent}) is one of full entanglement, we can orient Alice's coordinate system using an arbitrary unitary transformation $U_A$, as long as we simultaneously transform Bob's states using $U^*_B$ (see Sec.\ A of the Appendix).  This unitary transformation $U_A \otimes U^*_B$ will allow us to pick Alice's state $\ket{\psi'} = \ket{0}$, leaving the form of the shared state, Eq.~(\ref{fullent}), invariant.

This choice for $\ket{\psi'}$ simplifies the objective function (assuming full entanglement) to the form
\begin{eqnarray}
   F = \frac12 &\left( \right.& \left| a b  + \exp(-i(\alpha + \beta)) \sqrt{1 - a^2}\sqrt{1 - b^2}  \right|^2 + \left| a b'  + \exp(-i(\alpha + \beta')) \sqrt{1 - a^2}\sqrt{1 - b'^2}  \right|^2  \nonumber \\
    &\;& \; + {b}^2  - {b'}^2 - 2 \left. \right) . \nonumber 
\end{eqnarray}
Each of the first two terms on the RHS of this expression are non-negative, and are maximized when the complex phase factors within the square are $+1$.  A simple choice that accomplishes this is  $\alpha = \beta = \beta' =0$; thus all components of the quantum states $\ket{\psi}$, $\ket{\psi'}$, $\ket{\eta}$, and $\ket{\eta'}$ are real and non-negative. With this choice
 \begin{equation}
   F = \frac12 \left( ( a b  + \sqrt{1 - a^2}\sqrt{1 - b^2}  )^2 + ( a b'  +  \sqrt{1 - a^2}\sqrt{1 - b'^2}  )^2  + {b}^2  - {b'}^2 - 2  \right) . \label{qbfin} 
\end{equation}
Note that the objective function has domain $0 \leq a \leq 1, \; 0 \leq b \leq 1, \; 0 \leq b' \leq 1$;  its maximum value is
\[ F_{max} = (4 \cos^2(\pi/8)-3)/6, \]
which occurs at the only critical point of $F$  lying in the interior of the domain: $a = \cos(\pi/4)$, $b = \cos(\pi/8)$, and $b' = \cos(3\pi/8)$.  

These values of $a$, $b$, and $b'$ as well as the associated maximum enhancement  in the success rate that they yield for the channel ${\cal N}$ (namely $F_{max}$) correspond precisely to protocol presented in Prevedel {\it et al.}.  This provides a proof that the enhancement exhibited by Prevedel {\it et.al.} is optimal.   

\section{Shared qudits \label{qudits}}
We have explored the efficiency of the same classical channel $\cal N$ when Alice and Bob share two entangled qudits.  Intuitively one might assume that this apparently richer resource would allow a yet greater enhancement in Bob's ability to decode Alice's message, but this can be shown not to be the case, as follows.

Assume Alice and Bob share an entangled state of two qudits.  They will utilize this resource just as described in the previous section, Alice measuring along $\ket{\psi}$ and $\ket{\psi'}$ and Bob along $\ket{\eta}$ and $\ket{\eta'}$, with these four states also $d$-dimensional qudits. Create an orthonormal basis of d-dimensional quantum states $\ket{p_i},\;i=0,1,\ldots,d-1$, in Alice's space, using the Gram-Schmit process, with $\ket{\psi'} = \ket{p_0}$ and such that $\ket{\psi}$ can be written as a linear combination of $\ket{p_0}$ and $\ket{p_1}$.  Similarly, create an orthonormal basis $\ket{n_i}$ in Bob's space,  with $\ket{\eta} = \ket{n_0}$ and such that $\ket{\eta'}$ can be written as a linear combination of $\ket{n_0}$ and $\ket{n_1}$.  One can write a general expression for a shared pair of qudits, normalized, in terms of these bases:
\begin{equation}\label{PhiQD}
 \ket{\Phi} = \sum_{i,j=0}^{d-1} m_{ij}\ket{p_i}\ket{n_j}. 
\end{equation}
By truncating the sum we write a related, possibly unnormalized shared state
\begin{equation}\label{PhiTQD}
 \ket{\tilde{\Phi}} = \sum_{i,j=0}^{1} m_{ij}\ket{p_i}\ket{n_j}. 
\end{equation}

Due to the alignments of Alice's and Bob's measurement directions, $\pbrax{\psi}{\eta} = \bra{\tilde{\Phi}}\hat{P}_{\psi}\hat{P}_{\eta}\ket{\tilde{\Phi}}$, $\pbrax{\psi}{\eta'} = \bra{\tilde{\Phi}}\hat{P}_{\psi}\hat{P}_{\eta'}\ket{\tilde{\Phi}}$, $\pbrax{\psi'}{\eta} = \bra{\tilde{\Phi}}\hat{P}_{\psi'}\hat{P}_{\eta}\ket{\tilde{\Phi}}$, and $\pbrax{\psi'}{\eta'} = \bra{\tilde{\Phi}}\hat{P}_{\psi'}\hat{P}_{\eta'}\ket{\tilde{\Phi}}$.  It is straightforward to establish, as well, that 
\[  \bra{{\Phi}}\hat{P}_{\psi}\ket{{\Phi}} \ge \bra{\tilde{\Phi}}\hat{P}_{\psi}\ket{\tilde{\Phi}} \quad \text{and}  \bra{{\Phi}}\hat{P}_{\eta}\ket{{\Phi}} \ge \bra{\tilde{\Phi}}\hat{P}_{\eta}\ket{\tilde{\Phi}} . \]
Inserting these relationships in the general form of the objective function, Eq.~(\ref{enhancement}), it is easy to see that
\[ F(\ket{\Phi}, \ket{\psi_0},\ket{\psi_1}, \ket{\eta_0}, \ket{\eta_1}) \le F(\ket{\tilde{\Phi}}, \ket{\psi_0},\ket{\psi_1}, \ket{\eta_0}, \ket{\eta_1}).  \]
But for the issue of normalization of $\ket{\tilde{\Phi}}$, the right-hand side of this expression is identical to the result obtained using shared qubits, thus
\begin{eqnarray}
 \max(F(\ket{\Phi}, \ket{\psi_0},\ket{\psi_1}, \ket{\eta_0}, \ket{\eta_1})) &\le& \max(F(\ket{\tilde{\Phi}}, \ket{\psi_0},\ket{\psi_1}, \ket{\eta_0}, \ket{\eta_1}) \nonumber \\ &\le& \braket{\tilde{\Phi}}{\tilde{\Phi}} \max(F(\ket{\Phi_{qubit}}, \ket{\psi_0},\ket{\psi_1}, \ket{\eta_0}, \ket{\eta_1}) . \label{qudit}
\end{eqnarray}

Since $\braket{\tilde{\Phi}}{\tilde{\Phi}} \le 1$, Eq.~(\ref{qudit}) establishes the fact that the optimal entanglement-assisted communication of a single classical bit, using classical communication channel $\cal{N}$ and the Prevedel protocol, can be achieved using shared qubits.

It is possible to prove that if the system that Alice and Bob share comprises two fully entangled qudits, then the maximum enhancement value achievable is 
\[ \frac2d F_{max} , \]
where $F_{max} = (4 \cos^2(\pi/8)-3)/6$ is the maximum enhancement possible for qubits (see Sec.\ B of the Appendix.) Thus, for $d \geq 3$, fully entangled qudits do not allow for as high a success rate as fully entangled qubits.

\section{Generalized probabilities \label{genprob}}
A straightforward generalization of the classical channel $\cal N$ of Prevedel {\it et al.}\ comes from dropping the restriction that Bob receives the three possible channel outputs with equal probability.  Define the channel $\cal N'$ as the classical channel for which when Alice sends input $(q_1, q_2)$, Bob will receive output $(1,q_1)$ with probability $c_1$, output $(2,q_2)$ with probability $c_2$, and output $(P,q_1 \oplus  q_2)$ with probability $c_3$:  $c_1+c_2+c_3 = 1$. Even using this channel alone, the maximum probability of Alice's communicating a single bit ($q$) to Bob depends upon which two bits she communicates:  if she inputs $(q,q)$, the maximum probability is $(2c_1+2c_2+c_3)/2=1-c_3/2$; if she inputs $(q,0)$ it is $(2c_1+c_2+2c_3)/2= 1-c_2/2$; and if she inputs $(0,q)$ it is $(c_1+2c_2+2c_3)/2=1-c_1/2$.  Obviously the channel input should be chosen based upon which of the output probabilities is the smallest.  

When considering the degree of enhancement that can be provided by the sharing of entangled particles by Alice and Bob, one likewise needs to chose measurement strategies based on the relative probabilities of the three channel outputs.  We first examine the case in which $c_2$ is less than or equal to both $c_1$ and $c_3$.  The highest probability of communication of a single bit using the channel $\cal N'$ alone is $1-c_2/2$.  As in section \ref{qubits}, we assume Alice and Bob share entangled qubits;  Alice places $(q,\alpha)$ into $\cal N'$ (where $\alpha$ is the result of the measurement on her qubit, along $\ket{\psi}$ to send $q=0$, and along $\ket{\psi'}$ to send $q=1$); and Bob measures along different axes depending upon the channel output he receives (along $\ket{\eta'}$ when he receives $(2,\alpha)$, and along $\ket{\eta}$ when he receives $(P,q \oplus \alpha)$.)

Using results from section \ref{comprot} we establish Bob's success rate in this more general case as
\begin{eqnarray*}
   S' = 1 - \frac{c_2}2 + &c_2& \left(  \pbrax{\psi}{\eta'} - \pbrax{\psi'}{\eta'} - \pbrac{\psi}/2 + \pbrac{\psi'}/2 \right)  \\
   + &c_3& \left(  \pbrax{\psi}{\eta} + \pbrax{\psi'}{\eta} - \pbrac{\eta} - \pbrac{\psi}/2 - \pbrac{\psi'}/2 \right) .
\end{eqnarray*}
Assuming the shared states are qubits, and using the definitions of $L_0, a, a', b, b'$ and the phase angles $\alpha, \alpha', \beta, \beta'$ introduced in section \ref{qubits}, we follow the process of that section to first show
\[ S'_{L_0 = 1} = 1 - \frac{c_2}2 + c_2(a^2-{a'}^2) ({b'}^2-\frac12 ) + c_3\left( (a^2+{a'}^2) ({b}^2-\frac12 ) - b^2  \right) \] 
and
\[ S'_{L_0 = 1} = S'_{L_0 = 0} . \]
The logic for the general case exactly parallels that of section \ref{qubits}, leading to the result
\begin{eqnarray}
  S'(\Phi, \psi, \psi', \eta, \eta') &=& L_0 S'_{L_0=1} +(1-L_0)S'_{L_0=0}+ \sqrt{L_0(1-L_0)} G'(a,a',b,b',\alpha, \alpha',\beta,\beta') \nonumber \\
  &=& S_{L_0=0} + \sqrt{L_0(1-L_0)} G'(a,a',b,b',\alpha, \alpha',\beta,\beta'). \nonumber
\end{eqnarray}
The term including the factor $G'$ represents the enhancement of the success rate over that of the unentangled case $L_0 = 0$, and will be positive at its maximum value.  As can be seen by the $L_0$-dependence of the expression, enhancement is optimized when $L_0 = 1/2$, full entanglement.
 
As we have shown previously,  we can assume $\ket{\psi'} = \ket{0}$, allowing us to write the expression for successful deciphering as
\begin{eqnarray*}
   S' = 1 - \frac{c_2}{2} + &\frac{c_2}{2}& \left(  | ab'+\exp(-i(\alpha+\beta'))\sqrt{1-a^2}\sqrt{1-{b'}^2} |^2 - {b'}^2  \right)  \\
   + &\frac{c_3}2& \left(  | ab+\exp(-i(\alpha+\beta))\sqrt{1-a^2}\sqrt{1-{b}^2} |^2 + b^2 -2 \right) .
\end{eqnarray*}
The two terms in this expression with complex phase factors are both positive, and are both maximized by choosing the phase angles to be zero, which we do henceforth.  What remains in the process of maximizing $S'$ is a straightforward extremum problem in variables $a$, $b$, and $b'$, which (after some tedious algebra) yields
\begin{equation} {S'}_{max} = 1 + \frac{1}2 (\sqrt{c_2^2+c_3^2} - c_2 - c_3)  \label{sprime} \end{equation}
resulting from choices $a = \cos(\theta)$, $b = \cos(\theta/2)$, and $b' = \cos(\theta/2+\pi/4)$ with $\theta \equiv \arctan(c_2/c_3)$ .  (Signs are chosen to assure that $a$, $b$, and $b'$ are positive.)  

From the form of Eq.~(\ref{sprime}) it is clear that as long as either $c_2$ or $c_3$ is the smallest of the three output probabilities, there is a non-negative enhancement provided by entanglement.  The symmetry of this expression under interchange of $c_2$ and $c_3$ and the details of its derivation indicate  that it provides the optimal solution when either $c_2$ or $c_3$ is the smallest of the three probabilities.  If $c_1$ is smallest, this is not the case.  Alice should in this circumstance input $(\alpha,q)$ into channel ${\cal N}'$ (rather than $(q,\alpha)$.
  This switches the role of the first and second outputs, and the optimal communication enhancement is given by replacement of $c_2$ by $c_1$ in Eq.~{\ref{sprime}):
\begin{equation} {S'}_{max} = 1 + \frac{1}2 (\sqrt{c_1^2+c_3^2} - c_1 - c_3)  . \end{equation}

If any of the three probabilities $c_i$ are zero, a strategy can be chosen to have perfect communication possible using the classical channel alone.  In all other cases, entanglement provides enhancement of the classical channel's one-shot, successful-communication probability.

\section{Summary}

We have presented a formalism for evaluation of the degree of enhancement  that entanglement can provide for communication accuracy of information sent through a noisy classical channel, where both the channel used and the protocol employed are based on those presented in \cite{prevedel}.    We have utilized this formalism to derive the maximum enhancement of the classical communication channel that can be realized when the entanglement shared by sender and receiver resides in a two-qudit system, establishing also that qudits do not yield enhancement beyond that achievable using qubits.  A key simplification leading to the identification of the optimal protocol results from Eq.~(\ref{UnitaryFreedom}) of the Appendix,  employed with appropriate choices of unitary operators $U$ and $U^*$.   Our formalism and methods  should be useful in examining related entanglement-enhanced classical channel protocols.  

\section{Appendix \label{appx}}

\subsection{Simplifying the Objective Function}\label{SOF} We suppose that the initial state of  Alice and Bob's system of two fully entangled qudits is
\begin{equation}\label{FEQD}
\ket{\Phi} = \frac{1}{\sqrt{d}}\sum_{j=0}^{d-1} \ket{j}_A\ket{j}_B,
\end{equation}
where $d \ge 2$.  

\begin{lemma} Let $U$ be a unitary operator on ${\mathbb C}^d$ whose matrix representation with respect to Alice's basis $(\ket{j}_A)_{j=0}^{d-1}$ of ${\mathbb C}^d$ has entries $u_{ij}$, $0\le i,j\le d-1$.  Let $U^*$ be the unitary operator on ${\mathbb C}^d$ whose matrix with respect to Bob's basis $(\ket{j}_B)_{j=0}^{d-1}$ of ${\mathbb C}^d$ has entries $u_{i,j}^*$, $0\le i,j\le d-1$.  Then
$$
(U\otimes U^*)\ket{\Phi} = \ket{\Phi};
$$
that is, $\ket{\Phi}$ is an eigenvector of $U\otimes U^*$ with corresponding eigenvalue $1$.
\end{lemma}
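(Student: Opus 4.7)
The plan is to expand $(U\otimes U^*)\ket{\Phi}$ directly in the computational basis and then invoke unitarity of $U$ to collapse the resulting sum. I would begin by writing out $U\ket{j}_A = \sum_{i} u_{ij}\ket{i}_A$ and $U^*\ket{j}_B = \sum_{k} u_{kj}^{*}\ket{k}_B$ from the given definitions of the matrix entries, and then substitute these expressions into the definition (\ref{FEQD}) of $\ket{\Phi}$. After distributing the tensor product, this produces a triple sum
\[
(U\otimes U^*)\ket{\Phi} \;=\; \frac{1}{\sqrt{d}}\sum_{i,j,k} u_{ij}u_{kj}^{*}\,\ket{i}_A\ket{k}_B.
\]

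Next I would interchange the order of summation to isolate the inner sum $\sum_{j} u_{ij}u_{kj}^{*}$. Recognizing that the $(j,k)$ entry of $U^\dagger$ is $u_{kj}^{*}$, this inner sum is precisely the $(i,k)$ entry of $UU^\dagger$. Unitarity of $U$ gives $UU^\dagger = I$, so $\sum_{j} u_{ij}u_{kj}^{*} = \delta_{ik}$, and the double sum over $i$ and $k$ collapses to a single diagonal sum $\frac{1}{\sqrt{d}}\sum_{i}\ket{i}_A\ket{i}_B$, which by (\ref{FEQD}) is exactly $\ket{\Phi}$. This completes the proof.

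There is no genuine analytic obstacle here; the argument is essentially the ``transpose trick'' or ``ricochet property'' underlying the Choi--Jamiolkowski correspondence. The only point demanding care is the indexing convention for the adjoint: one must verify that the complex conjugate introduced by $U^*$, which is the entrywise conjugate of $U$ rather than the Hermitian conjugate $U^\dagger = \overline{U}^T$, pairs with the factor supplied by $U$ along the shared column index $j$, so that the contracted expression really is $UU^\dagger$ and not some other product such as $U^T U^*$ or $\overline{U}U$. Once the bookkeeping is arranged correctly, the eigenvalue-$1$ assertion follows in a single line, and no further machinery about $d$ or the structure of $U$ is needed beyond unitarity.
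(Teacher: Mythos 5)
Your proposal is correct and follows essentially the same route as the paper's proof: expand $(U\otimes U^*)\ket{\Phi}$ in the computational basis to obtain $\frac{1}{\sqrt{d}}\sum_{i,j,k}u_{ij}u_{kj}^*\ket{i}_A\ket{k}_B$ and collapse the inner sum over $j$ to $\delta_{ik}$ via unitarity. The paper phrases that inner sum as the inner product of the $i$-th and $k$-th rows of $U$ rather than as the $(i,k)$ entry of $UU^\dagger$, but these are the same observation.
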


\begin{proof}  For any matrix $M$, let $M^{(j)}$ denote the $j$-th row of $M$.   We have
\begin{eqnarray*}
(U\otimes U^*)\ket{\Phi} &=& \frac{1}{\sqrt{d}} \sum_{j=0}^{d-1}\sum_{i,k=0}^{d-1} u_{ij}u_{kj}^*\ket{i}_A\ket{k}_B\\
&=& \frac{1}{\sqrt{d}}\sum_{i,k=0}^{d-1} \braket{U^{(k)}}{U^{(i)}}\ket{i}_A\ket{k}_B\\
& =&  \frac{1}{\sqrt{d}}\sum_{i=0}^{d-1} \ket{i}_A\ket{i}_B.
\end{eqnarray*}
\end{proof}

 We assume that $\ket{\Phi}$ is given by (\ref{FEQD}) and that Alice measures in the directions  $\ket{\psi_0}$ and $\ket{\psi_1}$  while Bob measures in the directions  $\ket{\eta_0}$ and $\ket{\eta_1}$.  We extend $\{\ket{\psi_1}, \ket{\psi_0}\}$ to an orthonormal basis of ${\mathbb C}^d$ and choose $U$ to be a unitary operator on ${\mathbb C}^d$ that maps the first element in this basis, $\ket{\psi_1}$, to $\ket{0}_A$ and the second element in the basis to $\ket{1}_A$.  Then   $U\ket{\psi_1} = \ket{0}_A$ and $U\ket{\psi_0}$ is a linear combination of $\ket{0}_A$ and $\ket{1}_A$.  Define $U^*$ as in the statement of the preceding Lemma.  Because $(U\otimes U^*)\ket{\Phi} = \ket{\Phi}$, it is easy to see that
\begin{equation}\label{UnitaryFreedom}
F(\ket{\Phi}, \ket{\psi_0}, \ket{\psi_1}, \ket{\eta_0}, \ket{\eta_1}) = F(\ket{\Phi}, U\ket{\psi_0}, U\ket{\psi_1}, U^*\ket{\eta_0}, U^*\ket{\eta_1}),
\end{equation}
where $F$ is our objective function.  Thus, 
\begin{equation}\label{ROK}
\begin{split}
\max \{F(\ket{\Phi}, \ket{\psi}, \ket{0}_A, \ket{\eta}, \ket{\eta'}): \ket{\psi}, \ket{\eta}, \ket{\eta'}\ \text{are state vectors in}\  {\mathbb C}^d\}  = \rule{.75in}{0in}  \\ \max\{F(\ket{\Phi}, \ket{\psi_0}, \ket{\psi_1}, \ket{\eta_0}, \ket{\eta_1}): \ket{\psi_0}, \ket{\psi_1}, \ket{\eta_0},\ket{\eta_1}\ \text{are state vectors in}\ {\mathbb C}^d\};
\end{split}
\end{equation}
in fact, we may assume that $\ket{\psi}$ is a linear combination of $\ket{0}_A$ and $\ket{1}_A$.  

\subsection{Enhancement When $\ket{\Phi}$ is a Fully Entangled Two-Qudit State }\label{FETQDS}

We now prove that if the system that Alice and Bob share comprises two fully entangled qudits, then the maximum enhancement value achievable is 
\[ \frac2d F_{\max} , \]
where $F_{\max} = (4 \cos^2(\pi/8)-3)/6$ is the maximum enhancement possible for qubits.  Thus we assume that $\ket{\Phi}$ is in the fully entangled two-qudit state  (\ref{FEQD}).  Using the inequality (\ref{ROK}) with $\ket{\psi}$ restricted to be a linear combination of $\ket{0}_A$ and $\ket{1}_A$ , we see that we may assume in Eq.~(\ref{PhiQD}), that $\ket{p_0} = \ket{0}$ and $\ket{p_1}$ is a linear combination of $\ket{0}$ and $\ket{1}$.   Because $\ket{\Phi}$ is in the fully entangeld state (\ref{FEQD}), Eq.~(\ref{PhiQD}), becomes
\begin{equation}\label{FER}
\frac{1}{\sqrt{d}}\sum_{j=0}^{d-1} \ket{j}_A\ket{j}_B  =  \sum_{i,j=0}^{d-1} m_{ij}\ket{p_i}\ket{n_j}.
\end{equation}
Because $\ket{p_i}$, for $i\ge 2$, is orthogonal to linear combinations of $\ket{p_0}$ and $\ket{p_1}$, it must be orthogonal to both $\ket{0}$ and $\ket{1}$.  It follows from Eq.\ (\ref{FER}) that 
\begin{equation}\label{GT}
\frac{1}{\sqrt{d}}(\ket{00} + \ket{11}) = \sum_{i=0}^1\sum_{j=0}^{d-1} m_{ij}\ket{p_i}\ket{n_j}.
\end{equation}
Recalling that  $\ket{\tilde{\Phi}}$ from Eq.~(\ref{PhiTQD}) is given by
$$
\ket{\tilde{\Phi}} = \sum_{i,j=0}^1 m_{ij}\ket{p_i}\ket{n_j}.
$$
we see, thanks to (\ref{GT}), that
$$
\|\ket{\tilde{\Phi}}\|^2 \le  \left\|\frac{1}{\sqrt{d}}(\ket{00} + \ket{11})\right\|^2 = \frac{2}{d}.
$$
Because, by Eq~(\ref{qudit}),  the probability of success starting with two qudits is at best $5/6 + \|\tilde{\Phi}\|^2F_{\max}$, the preceding equation shows the probability of success starting with two {\it fully entangled qudits} is at best   $5/6 + \frac{2}{d}F_{\max}$ (a bound that is achievable by choosing $\ket{\phi} = \cos(\pi/4)\ket{0} + \sin(\pi/4)\ket{1}$, $\ket{\psi'} = \ket{0}$, $\ket{\eta} = \cos(\pi/8) \ket{0} + \sin(\pi/8)\ket{1}$, and $\ket{\eta'} = \cos(3\pi/8)\ket{0} + +\sin(3\pi/8)\ket{1}$).

\end{document}